\newcommand{\raisemath}[1]{\mathpalette{\raisem@th{#1}}}
\newcommand{\raisem@th}[3]{\raisebox{#1}{$#2#3$}}
\begin{document}
\title{Connected-Dense-Connected Subgraphs in Triple Networks}
%
%
\author{Dhara Shah\inst{1} \and Yubao Wu\inst{1} \and
Sushil Prasad\inst{1} \and Danial Aghajarian\inst{2}}
\authorrunning{Shah et al.}
%
\institute{Department of Computer Science, Georgia State University, Atlanta 30303, USA
\\
\email{\{dshah8, ywu28, sprasad\}@gsu.edu} \and \email{daghajarian@cs.gsu.edu}}
\maketitle              
\begin{abstract}
Finding meaningful communities - subnetworks of interest within a large scale network - is a problem with a variety of applications. Most existing work towards community detection focuses on a single network. However, many real-life applications naturally yield what we refer to as Triple Networks. Triple Networks are comprised of two networks, and the network of bipartite connections between their nodes. In this paper, we formulate and investigate the problem of finding Connected-Dense-Connected subgraph (CDC), a subnetwork which has the largest density in the bipartite network and whose sets of end points within each network induce connected subnetworks. These patterns represent communities based on the bipartite association between the networks. To our knowledge, such patterns cannot be detected by existing algorithms for a single network or heterogeneous networks. We show that finding CDC subgraphs is NP-hard and develop novel heuristics to obtain feasible solutions, the fastest of which is O(nlogn+m) with n nodes and m edges. We also study different variations of the CDC subgraphs. We perform experiments on a variety of real and synthetic Triple Networks to evaluate the effectiveness and efficiency of the developed methods. Employing these heuristics, we demonstrate how to identify communities of similar opinions and research interests, and factors influencing communities.
\end{abstract}

\keywords{Triple Networks \and Unsupervised community detection \and max-flow densest bipartite subgraph \and NP-Hard \and greedy node deletions \and local search }

\section{Introduction}
Community detection is a key primitive with a wide range of applications in real world \cite{fortunato2010community}. Most existing work focuses on finding communities within a single network. In many real-life applications, we can often observe Triple Networks consisting of two networks and a third bipartite network representing the interaction between them. For example, in Twitter, users form a follower network, hashtags form a co-occurrence network, and the user-hashtag interactions form a bipartite network. The user-hashtag interactions represent a user's posts or tweets containing a hashtag. Figure \ref{fig:twitter} shows a real Twitter Triple Network. The nodes on the left part represent users and those on the right represent hashtags. The edges among the nodes on the left represent a user following other user. The edges among the nodes on the right represent two hashtags appearing in the same tweet. The edges in between represent a user interacting with tweets containing a hashtag. This Triple Network model can ideally represent many real world applications such as taxi pick-up-drop-off networks, Flixster user-movie networks, and author-paper citation networks.\par
\begin{figure}[!t]
\centering
\includegraphics[width=0.45\columnwidth]{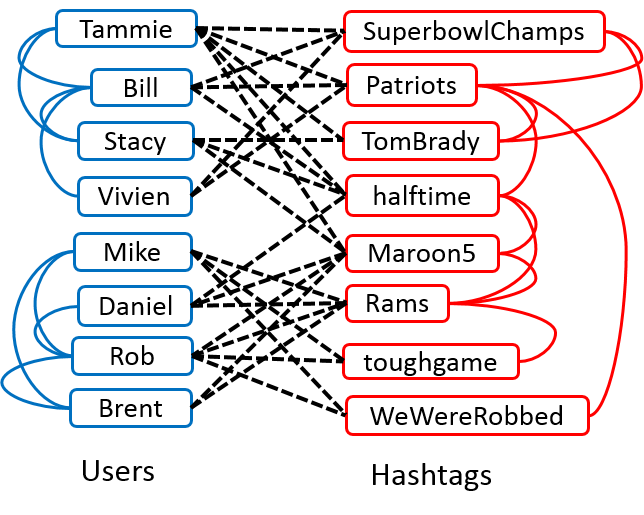}
\vspace{-10pt}
\caption{Twitter Triple Network}\label{fig:twitter}
\vspace{-2pt}
\end{figure}

In general, maximizing the density of connections in the Connected-Dense-Connected (CDC) subgraph of a  triple network is an unsupervised method for approximating the communities affiliated with the attributes.
In the twitter example the density is the number of connections between users and tweets and reflects the degree to which the users are engaged with those tweets.
Therefore finding the CDC subgraph is likely to be a useful approach to understanding social and other networks.
Given a Triple Network consisting of two graphs $G_a(V_a, E_a)$ and $G_b(V_b, E_b)$ and a bipartite graph $G_c(V_a, V_b, E_c)$, the CDC consists of two subsets of nodes $S \subset V_a$ and $T \subset V_b$ such that the induced subgraphs $G_a[S]$ and $G_b[T]$ are both connected and the density of $G_c[S, T]$ is maximized.\par
In the Twitter Triple Network in Figure \ref{fig:twitter}, we observe two CDC subgraphs: the one at the top with $S_1$ = \{Tammie, Bill, Stacy, Vivien\} and $T_1$ = \{Patriots, TomBrady, SuperbowlChamps, halftime, Maroon5\}, and the one at the bottom with $S_2$ = \{Mike, Daniel, Rob, Brent\} and $T_2$ = \{Rams, toughgame, Maroon5\}. In either of the two CDCs, the left and right networks are connected and the middle one is dense. These CDCs are meaningful. The CDC at the top shows that Patriots' fans are praising Tom Brady and are happy to be champions again. The CDC at the bottom shows that LA Rams' fans are disappointed to loose the game.\par
Our problem is different from finding co-dense subgraphs \cite{kelley2005systematic,pei2005mining} or coherent dense subgraphs \cite{hu2005mining,li2012pattern}, whose goal is to find the dense subgraphs preserved across multiple networks with the same types of nodes and edges. In our problem, the left and right networks contain different types of nodes and the edges in the three networks represent different meanings. Our problem is also different than the densest connected subgraphs in dual networks \cite{wu2015finding}. Dual networks consist of one set of nodes and two sets of edges. Triple Networks consist of two sets of nodes and three sets of edges. Triple Networks can degenerate to dual networks when the two sets of nodes are identical and the bipartite links connect each node to its replica. \par

%
%
%
\section{Background and related work}

The problem of finding a densest subgraph of a graph has been well studied by data mining community. At the core, this problem asks for finding subgraphs with the highest average degree. This problem has been solved in polynomial time using max-flow min-cut approach \cite{goldberg1984finding}. Inspired by this approach, the problem of finding densest subgraph in a directed graph has also been solved in polynomial time \cite{khuller2009finding}. The prohibitive cost of these polynomial time algorithms has been addressed with 2-approximation algorithm \cite{charikar2000greedy}. However, variations of densest subgraph problems, such as discovery of densest subgraph with $k$ nodes, have been shown to be NP-hard \cite{bhaskara2010detecting}. On the other hand, the problem of finding densest subgraph with pre-selected seed nodes is solvable in polynomial time \cite{saha2010dense}.\par
The solutions above are designed for homogeneous information network structure where the nodes and edges have just one type. Heterogeneous information networks \cite{sun2009ranking} -- the networks with multiple node and edge types -- have been a new development in the field of data mining. Heterogeneous network structure provides a model for graph infusion with rich semantics. The Triple Networks introduced in this paper are a type of heterogeneous network with node types $V_a$ and $V_b$, and edge types $E_a,E_b$ and $E_c$. Our work can be categorized as unsupervised clustering in heterogeneous network. Parallel to our work, Boden et al. discuss a density based clustering approach of k-partite graphs in heterogeneous information structure \cite{boden2014density}. In this work, two types of nodes $V_a$  and $V_b$ are considered. With node type specific hyper-parameters and the bipartite connections $E_c$, the connections $E_a$ and $E_b$ are inferred. This method of clustering is different from our work where $E_a$ and $E_b$ are part of the network, and the definition of density is hyper-parameter free. Boden et al. detect communities by subspace clustering on nodes' projection to attribute space. In contrast, our work of finding CDC subgraphs cannot be inferred as a subspace clustering technique. Though both works produce iterative refinement algorithms, the former concentrates on improving inference of $E_a$ and $E_b$ iteratively.\par 
The closest network schema to our work is dual networks \cite{wu2015finding}, discovered by Wu et al. A dual network is comprised of two networks having the same set of nodes but different types of edges. These two networks are inferred as physical and conceptual networks. Wu et al. provide 2-approximation algorithms for NP-hard problem of finding subgraphs that are densest in conceptual network, and are connected in physical network. Though the network architecture and subgraph patterns are different, our work is inspired by the pruning methods and variants proposed in this work.
To the best of our knowledge there is no comparable algorithm for finding a densest CDC subgraph.
\section{Triple network, CDC subgraphs and variants}
In this section we define Triple Network, CDC subgraph and its variants. We prove that finding CDC subgraph and variants from a Triple Network is NP-hard.

\begin{figure}[H]\vspace{-10pt}
\centering
\subfigcapskip = -2pt
\subfigure[\scriptsize An example of a toy Triple Network 
]{
\includegraphics[width=0.21\columnwidth]{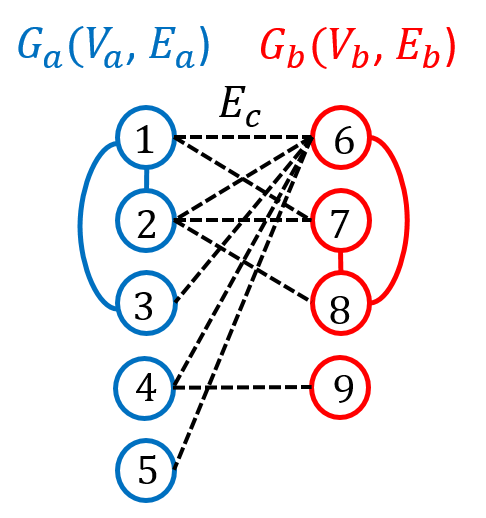}
\label{fig:toytriplenetwork}
}\hspace{6pt}
\subfigure[\scriptsize CDC subgraph of the toy Triple Network]{
\includegraphics[width=0.15\columnwidth]{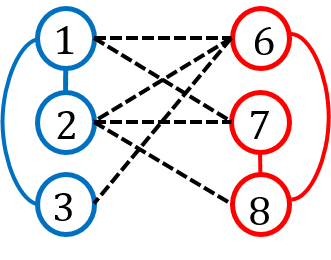}
\label{fig:toycdc}
}\hspace{6pt}
\subfigure[\scriptsize OCD subgraph of the toy Triple Network]{
\includegraphics[width=0.15\columnwidth]{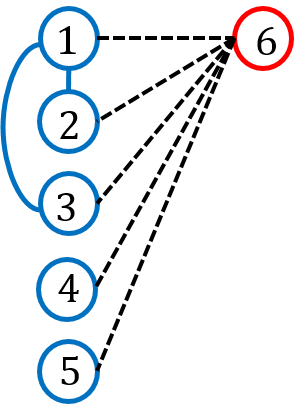}
\label{fig:toyocd}
}\hspace{6pt}
\vspace{-10pt}
\caption{Toy Triple Network and its CDC and OCD subgraphs}
\vspace{-6pt}
\end{figure}

\begin{definition}[Triple network] 
Let $G_a(V_a,E_a)$ and $G_b(V_b,E_b)$ represent graphs of two networks. Let $G_c(V_a,V_b,E_c)$ represent the bipartite graph between $G_a$ and $G_b$. $G(V_a,V_b,E_a,E_b,E_c)$ is the Triple Network generated by $G_a,G_b$ and $G_c$.
\end{definition}

We abbreviate a Triple Network as $G$. An example of Triple Network is illustrated in figure \ref{fig:toytriplenetwork}.\\

The subgraphs induced by $S_a\subset V_a$ and $S_b \subset V_b$ in networks $G_a,G_b$ and $G_c$ are denoted by $G_a[S_a]$, $G_b[S_b]$ and $G_c[S_a,S_b]$. For brevity, we denote this sub Triple Network, a set of three subgraphs, as $G[S_a,S_b]$.

\begin{definition}[Density of a Triple Network] 
Given a Triple Network $G[S_a,S_b]$, its density is defined as $
\rho(S_a,S_b)=\frac{|E_c(S_a,S_b)|}{\sqrt[]{|S_a||S_b|}}$, where $|E_c[S_a,S_b]|$ is the number of bipartite edges in subgraph $G_c[S_a,S_b]$, $|S_a|$ is the number of nodes in $G_a[S_a]$ and $|S_b|$ is the number of nodes in $G_b[S_b]$.
\end{definition}

For example, the density of sub Triple Network in figure \ref{fig:toycdc} with $S_a=\{1,2,3\}$ and $S_b = \{6,7,8\}]$ is $\rho(S_a,S_b)=\frac{|E_c(S_a,S_b)|}{\sqrt[]{|S_a||S_b|}} = \frac{6}{\sqrt[]{3*3}} = 2$.

By definition of density, only the bipartite edges of a Triple Network contribute to the density. Hence, the density of a Triple Network $G$ is same as the density of its bipartite subgraph $G_c$. 

\subsection{ Connected-Dense-Connected (CDC) subgraphs}
\begin{definition}[CDC subgraph]
Given Triple Network $G(V_a,V_b,E_a,E_b,E_c)$, a CDC subgraph is a sub Triple Network $G[S_a,S_b]$ such that 

\begin{enumerate}
\item $G_a[S_a]$ and $G_b[S_b]$ are connected subgraphs, and
\item the density $\rho(S_a,S_b)$ is maximized.
\end{enumerate}
\end{definition}

For example, the density of each CDC subgraph in figure \ref{fig:toycdc} is 2, higher than density of any other sub Triple Network of the Triple Network \ref{fig:toytriplenetwork} that is connected in $G_a$ and $G_b$. A Triple Network can have multiple CDC subgraphs. 

\begin{theorem}\label{cdcnphard}
Finding a CDC subgraph in a Triple Network is NP-Hard.
\end{theorem}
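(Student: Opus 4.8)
The plan is to give a polynomial-time reduction from a known NP-hard problem. The observation that guides the choice of source problem is that the density objective by itself is tractable: maximizing $\rho(S_a,S_b)=|E_c(S_a,S_b)|/\sqrt{|S_a||S_b|}$ over \emph{all} $S_a\subseteq V_a$, $S_b\subseteq V_b$ is a densest-bipartite-subgraph computation, solvable in polynomial time by a (parametric) max-flow argument. Hence the difficulty of CDC cannot live in the density, and must be manufactured by the requirement that $G_a[S_a]$ and $G_b[S_b]$ be connected. I would therefore reduce from a problem whose hardness is intrinsically about connectivity, namely \textsc{Steiner Tree}: given a graph $\Gamma=(W,F)$, a terminal set $R\subseteq W$ with $|R|=r$, and a budget $B$, decide whether some connected subgraph of $\Gamma$ contains all of $R$ using at most $B$ vertices. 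It suffices to prove hardness for the restricted family of Triple Networks in which $V_b$ is a single node, since every such instance is a legitimate CDC instance and proving a special case hard proves the general problem hard.

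Given a \textsc{Steiner Tree} instance I build the following Triple Network. The left graph $G_a$ is $\Gamma$ augmented, for every terminal $\tau\in R$, with $L$ pendant \emph{copy} nodes $\tau^{(1)},\dots,\tau^{(L)}$, each joined only to $\tau$, where $L$ is a polynomial parameter fixed below. The right graph $G_b$ is a single node $t$, so that $S_b=\{t\}$ is forced and is trivially connected. The bipartite graph is $E_c=\{\,(t,\tau^{(i)})\,\}$, joining $t$ to exactly the $rL$ copy nodes. For a connected $S_a$ (with $S_b=\{t\}$) we then have
\[
\rho(S_a,\{t\})=\frac{\#\{\text{copy nodes in }S_a\}}{\sqrt{|S_a|}}.
\]
I take the decision threshold $\rho_0=rL/\sqrt{B+rL}$ and set $L=|W|+1$ (any $L>B$ works), so the construction is polynomial, and I will show that a CDC of density at least $\rho_0$ exists iff the Steiner instance is a yes-instance.

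For the \emph{forward} direction, a Steiner subgraph on at most $B$ vertices together with all $rL$ copies is connected and has $\rho=rL/\sqrt{b+rL}\ge\rho_0$ with $b\le B$. For the \emph{converse}, consider any connected $S_a$; let it meet $j$ terminals and have non-copy part $S_a\cap W$ of size $b$. Because every copy is pendant, $S_a$ can contain copies only of those $j$ terminals, and since $\rho$ increases in the number of retained copies,
\[
\rho(S_a,\{t\})\ \le\ \frac{Lj}{\sqrt{b+Lj}}\ \le\ \sqrt{Lj}.
\]
If $j<r$ this is at most $\sqrt{L(r-1)}$, which a short calculation shows is below $\rho_0$ exactly when $L>B(r-1)/r$, a condition our $L$ satisfies; hence any solution reaching $\rho_0$ must meet all $r$ terminals. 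For $j=r$ the same bound gives $\rho\le rL/\sqrt{b+rL}$, and $\rho\ge\rho_0$ then forces $b\le B$. Since $S_a\cap W$ is connected in $\Gamma$ and contains $R$, it is exactly a size-$\le B$ Steiner subgraph, completing the equivalence.

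The step I expect to be the crux is the converse, and specifically the design of the copy gadget. The geometric-mean normalization rewards any large, richly connected region, so without amplification a tightly clustered handful of terminals can outscore the genuine Steiner solution (indeed, with $L=1$ partial coverage wins as soon as $B\gg r$). The technical heart is therefore the choice of $L$ and the gap estimate separating $\sqrt{L(r-1)}$ from $\rho_0\approx\sqrt{Lr}$: the threshold lies within $O(1/\sqrt{L})$ of $\sqrt{Lr}$ while every partial solution is $\Theta(\sqrt{L})$ below it, so a polynomial $L$ cleanly quarantines partial coverage. Once this separation is established the remaining equivalence is routine, and because the reduced instances use only a single right-hand node, the argument also localizes the source of hardness squarely in the connectivity constraints rather than in the density objective.
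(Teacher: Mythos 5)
Your reduction is correct, but it is not the route the paper takes: the paper proves the theorem by a reduction from \textsc{Set Cover} (with the details deferred to an external reference), whereas you reduce from \textsc{Steiner Tree}. The two approaches buy different things. A set-cover reduction typically encodes sets and elements as the two sides of the bipartite graph and lets the density objective do part of the counting; your construction instead collapses $V_b$ to a single vertex, so the objective degenerates to $(\#\text{copy nodes in }S_a)/\sqrt{|S_a|}$ and \emph{all} of the hardness is carried by the connectivity constraint on $G_a[S_a]$. That is arguably a sharper statement: it shows the problem is already NP-hard when $|V_b|=1$ and $G_b[S_b]$ is trivially connected, and it substantiates your opening observation that the unconstrained densest-bipartite-subgraph objective is polynomial (which the paper's own MDS algorithm confirms). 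Your pendant-copy amplification with $L=|W|+1$ is the right device, the monotonicity of $c\mapsto c/\sqrt{b+c}$ holds, and the gap calculation ($\sqrt{L(r-1)}<rL/\sqrt{B+rL}$ iff $L>B(r-1)/r$) checks out. Two small points you should make explicit: (i) the degenerate set $S_a$ consisting of a single copy node is connected yet contains a copy whose parent terminal is absent, so the claim that copies can only accompany their parent terminals needs the caveat $|S_a|\ge 2$ (the excluded case has density at most $1<\rho_0$, so it is harmless); and (ii) you implicitly need $B\le |W|$ so that $L=|W|+1>B$, which is without loss of generality because larger budgets make the Steiner instance trivially decidable.
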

\begin{proof}
We prove that finding a CDC subgraph is a reduction of set-cover problem. Please refer to \cite{mycdc} for details. 
\end{proof}
\subsection{Variants of CDC subgraph}
CDC subgraphs stipulate connectedness of $G_a(S_a)$ and $G_b(S_b)$. Alleviating this connectivity constraint, we define OCD subgraphs for which exactly one of $G_a(S_a)$ or $G_b(S_b)$ is connected.
\begin{definition}[OCD subgraph]
Given a Triple Network $G(V_a,V_b,E_a,E_b,E_c)$ a OCD subgraph is a sub Triple Network $G[S_a,S_b]$ such that
\begin{enumerate}
\item Exactly one of $G_a[S_a]$ or $G_b[S_b]$ is connected, and
\item The density $\rho(S_a,S_b)$ is maximized.
\end{enumerate}
\end{definition}

For example, the sub Triple Network $G[\{1,2,3,4,5\},\{6\}]$ with the highest density 2.23 in figure \ref{fig:toyocd} is an OCD subgraph as $G_a[\{5\}]$ is connected. A Triple Network can have multiple OCD subgraphs.\\\\
\textbf{Adding constraints to CDC and OCD subgraphs}  We observe that CDC patterns are meaningful around pre-selected nodes in $G_a(S_a)$ or $G_b(S_b)$. We identify these pre-selected nodes as seeds. We introduce CDC and OCD subgraphs with seed constraints, where $G_a(S_a)$ or $G_b(S_b)$ should maintain their connectivity constraints while containing the seeds. 
\begin{definition}(CDC\_seeds).Given a Triple Network $G(V_a,V_b,E_a,E_b,E_c)$ and sets of seed nodes $V_1\subset V_a$ and $V_2 \subset V_b$, the CDC\_seeds subgraph consists of sets of nodes  $S_a,S_b$ such that $V_1\subset S_a $, $V_2\subset  S_b$, $G_a[S_a]$ and $G_b[S_b]$ are connected and density of $G[S_a,S_b]$ is maximized. 
\end{definition}
\begin{definition}(OCD\_seed). Given a Triple Network $G(V_a,V_b,E_a,E_b,E_c)$ and a set of node $V$ with $V\subset S$, the OCD\_seed consists of sets of nodes such that either $G_a[S]$ or $G_b[S]$ is connected and the density of $G[S,E_c[S]]$ is maximized.
\end{definition}
Finding OCD, CDC\_seeds and OCD\_seed subgraphs in a Triple Network is NP-hard. Similar set-cover arguments as in Theorem \ref{cdcnphard} could  be used to prove it. Please refer to \cite{mycdc} for details.

\section{Heuristic algorithms}
Finding CDC subgraphs is NP-hard. Hence in this section, we propose heuristic algorithms for finding feasible solutions. We propose algorithms with following two approaches. 

In the first approach, we first obtain the densest bipartite subgraph $G_c[S_a,S_b]$. We then find the connected components of $G_a[S_a]$ and $G_b[S_b]$ using BFS. Thus we obtain connected sub Triple Networks with bipartite edges in $G_c[S_a,S_b]$. We choose the highest density results as feasible CDC subgraphs. Since the time complexity of obtaining densest bipartite subgraph is higher than that of BFS, algorithms in sections \ref{MDS} and \ref{GNDR} focus on improving the complexity of finding the densest bipartite subgraphs.

In the second approach, we obtain local CDC subgraphs with given seed nodes from $V_a$ and $V_b$ by adding highest bipartite degree nodes while maintaining the connectedness in $G_a$ and $G_b$. This Local Search algorithm is presented in section \ref{LS}. 

We observe that real-world Triple Networks are sparse in $E_c$. We also observe that a connected densest subgraph exists for a bipartite graph\cite{mydensestbp}. Exploiting these virtues, we divide the bipartite graph $G_c[V_a,V_b]$ in to smaller connected bipartite subgraphs and apply the densest subgraph algorithms only for larger subgraphs. This optimization significantly reduces the running-times of our algorithms. 

\subsection{Maxflow Densest Subgraph (MDS)}\label{MDS}
MDS algorithm, formalized as Algorithm \ref{maxflow}, finds a densest bipartite subgraph of a Triple Network in polynomial time using max-flow min-cut strategies discussed in \cite{khuller2009finding} and \cite{goldberg1984finding}. We provide the details of our derivation including proofs, over all approach and examples in \cite{mydensestbp}. The density difference of any two subgraphs of a bipartite graph $G_c[V_a,V_b]$ is no less than$\frac{1}{|V_a|^2|V_b|^2}$. Hence, the binary search in MDS with step size $\frac{1}{|V_a|^2|V_b|^2}$ halts in $ O(|V_a|^{3/2}|V_b|^{3/2})$ iterations. Within each iteration, the min cut is calculated in $O(|V_a|+|V_b|)^2(2(|V_a|+|V_b|) + |E_c|))$. Hence, the complexity of MDS is $O(|V_a|^{4.5}|V_b|^{4.5})$. Adding the cost of BFS for finding connected components in $G_a$ and $G_b$, the upper-bound still remains unchanged.\par
Though polynomial time, the prohibitive time complexity of MDS algorithm makes it impracticable to employ for large Triple Networks. By using MDS results on smaller bipartite graphs as a baseline, we develop heuristics in section \ref{GNDR}. 

\begin{algorithm}[!t]
\SetAlgoVlined
\SetAlFnt{\small} 
\SetAlTitleFnt{\small} 
\SetAlCapFnt{\small}
\SetAlCapNameFnt{\small}
\SetAlCapHSkip{0ex} 
\SetAlgoCaptionSeparator{} 
\SetInd{1.6em}{1.0em} 
\SetNlSty{textsf}{}{:}
\SetNlSkip{-0.7em} 
\SetKwProg{MyRepeat}{repeat}{}{}{}
\SetKwProg{MyFor}{for}{}{}{}
\SetKwProg{MyWhile}{while}{}{}{}
\SetKwProg{myproc}{Procedure:}{}
\setstretch{}
\caption{Maxflow Densest Subgraph (MDS)} \label{maxflow}
\setstretch{1}
\begin{small}
\KwIn{Triple Network $G(V_a,V_b,E_a,E_b,E_c)$,with $V_a \neq \phi,V_b\neq \phi$}
\KwOut{A densest bi-partite subgraph $G_c[S_a,S_b]$ of $G$}
\BlankLine
\hspace{1.1em} $possible\_ratios = \{\frac{i}{j} | i \in [1,\cdots|V_a|], j \in [1, \cdots |V_b|]\}$\\
\hspace{1.1em} $densest\_subgraph = \phi,maximum\_density=\rho(V_a,V_b)$\hspace{40.6pt}\\

\hspace{1.1em}\MyFor{ ratio guess $r \in possible\_ratios$ \textbf{\textup{do}} }
{	
$low\leftarrow \rho(V_a,V_b),high\leftarrow \sqrt[]{|V_a||V_b|},g=G_c[V_a,V_b]$\\
	\SetInd{0.6em}{1em} 
	\MyWhile{$high-low\geq \frac{1}{|V_a|^2|V_b|^2}$ \textbf{\textup{do}} }
    {
    $mid = \frac{high+low}{2}$\\
    construct a flow graph $G'$ as described in \cite{mydensestbp} and find the minimum s-t cut $S,T$\\
    $g'= S\setminus \{\textrm{source node }s\}$ \\
    \DontPrintSemicolon 
	\If{$g' \neq \phi$}{$g\leftarrow g'$ \\ $low = max\{mid,\rho(g)\}$}\lElse{$high = mid$}
    \BlankLine
	\If{ $maximum\_density < low$ }{$maximum\_density = low$\\
    $densest\_subgraph=g$}
	}
}
\BlankLine
\end{small}
\end{algorithm}

\subsection{Greedy Node Deletions} \label{GNDR}
In this section, we present heuristics to obtain a dense bipartite subgraph with a reduced time complexity. 

The first heuristic is to iteratively delete the nodes with the lowest bipartite degree and yield the densest subgraph obtained in the process. This algorithm of Greedy Node Deletion using degrees (GND) is formalized as Algorithm \ref{gnd}, where criterion in line \ref{gndcriterion} is node degree. 

However, degree is not the best measure of a node's impact on density. Figure \ref{fig:toytriplenetwork} illustrates that GND deletes the nodes $\{3, 4, 5\}$ iteratively. This order of deletions leads to missing the densest bipartite subgraph $[\{1,2,3,4,5\},\{6\}]$ in figure \ref{fig:toyocd}. Instead of accounting for the connections of a node, the percent of the possible connections of that node may serve as a better measure of the node's impact on density. With this intuition, we define rank of a node.
\begin{definition}[Rank] Let $G(V_a,V_b,E_a,E_b,E_c)$ be a Triple Network. For $v_a \in V_a, rank(v_a)=\frac{d(v_a)}{|V_b|}$ and for $v_b \in V_b, rank(v_b)=\frac{d(v_b)}{|V_a|}$. 
\end{definition}
Using the lowest rank as the deletion criterion, we modify Algorithm \ref{gnd} to formulate Greedy Rank Deletion (GRD) algorithm where the criterion of deletion in line \ref{gndcriterion} is rank. 

GND and GRD delete nodes sequentially. To expedite this process, we delete all the nodes satisfying the deletion criterion in bulk in each iteration instead. This idea is formulated as fast Rank Deletion (FRD) Algorithm in \ref{frd}. These bulk deletions do not lower the time complexity upper-bound, but the number of iterations decreases exponentially. The deletion criterion of FRD could be tuned by choosing different $\epsilon$ values from $(-1,1)$ with $\epsilon$ values from lower to higher resulting in less to more deletions per iteration.

By maintaining two $\{$degree:node$\}$ Fibonacci heaps and an index on the nodes, the time complexity of these greedy deletion algorithms is $O((V_a+V_b)log(V_a+V_b)+E_c)$. Adding the cost of BFS for connected components in $G_a$ and $G_b$, the total time complexity for obtaining CDC subgraphs is $O((V_a+V_b)log(V_a+V_b)+E_c+E_a+E_b)$.\\
\begin{minipage}{6cm}
\setlength{\algomargin}{0em}
\setlength{\algoheightrule}{0.4pt} 
\setlength{\algotitleheightrule}{0.4pt} 
  \begin{algorithm}[H]
\SetAlgoVlined
\SetAlFnt{\small} 
\SetAlTitleFnt{\small} 
\SetAlCapFnt{\small}
\SetAlCapNameFnt{\small}
\SetAlCapHSkip{0ex} 
\SetAlgoCaptionSeparator{} 
\SetInd{1.6em}{1.0em} 
\SetNlSty{textsf}{}{:}
\SetNlSkip{-0.7em} 
\SetKwProg{MyRepeat}{repeat}{}{}{}
\SetKwProg{MyFor}{for}{}{}{}
\SetKwProg{MyWhile}{while}{}{}{}
\setstretch{0.5}
\caption{Greedy Node Deletions} \label{gnd}
\setstretch{1}
\begin{small}
\KwIn{Triple Network $G(V_a,V_b,E_a,E_b,E_c)$,with $V_a \neq \phi,V_b\neq \phi$,\\
\hspace{1.1em} \textit{criterion} to delete nodes}
\KwOut{A densest subgraph $G_c[S_a,S_b]$ of $G$}
\BlankLine
\hspace{1.1em} $S_a=V_a,S_b=V_b$\\
\hspace{1.1em} $maximim\_density=\rho(V_a,V_b)$\hspace{10.6pt}\\
\hspace{1.1em}\While{$V_a \neq \phi$ and $V_b \neq \phi$}{
$v = $ node with minimum \textit{criterion} in $V_a\cup V_b$\\ \label{gndcriterion}
$V_a=V_a\setminus \{v\},V_b=V_b\setminus \{v\}$\\
\If{$maximum\_density<\rho(V_a,V_b)$}{
	$S_a=V_a,S_b=V_b$,\\$E_c=E_c[V_a,V_b]$
	}
    }
    \hspace{1.1em}\KwRet{$G_c[S_a,S_b]$}
\end{small}
  \end{algorithm}
\end{minipage}%
\hfill
\begin{minipage}{6cm}
  \setlength{\algomargin}{0em}
\setlength{\algoheightrule}{0.4pt} 
\setlength{\algotitleheightrule}{0.4pt} 
\begin{algorithm}[H]
\SetAlgoVlined
\SetAlFnt{\small} 
\SetAlTitleFnt{\small} 
\SetAlCapFnt{\small}
\SetAlCapNameFnt{\small}
\SetAlCapHSkip{0ex} 
\SetAlgoCaptionSeparator{} 
\SetInd{1.6em}{1.0em} 
\SetNlSty{textsf}{}{:}
\SetNlSkip{-0.7em} 
\SetKwProg{MyRepeat}{repeat}{}{}{}
\SetKwProg{MyFor}{for}{}{}{}
\SetKwProg{MyWhile}{while}{}{}{}
\setstretch{0.5}
\caption{Fast Rank Deletion (FRD)} \label{frd}
\setstretch{1}
\begin{small}
\KwIn{Triple Network $G(V_a,V_b,E_a, E_b, E_c)$,with $V_a \neq \phi,V_b\neq \phi,$\\ 
\hspace{1.3em}value of $\epsilon \in (-1,1)$}
\KwOut{A densest bi-partite subgraph $G_c[S_a,S_b]$ of $G$}
\BlankLine
\hspace{1.1em} $S_a=V_a,S_b=V_b,$\\
\hspace{1.1em}$maximim\_density=\rho(V_a,V_b)$\\
\hspace{1.1em}\While{$V_a \neq \phi$ and $V_b \neq \phi$}{
$\Bar{r}=$ average node rank in $G$\\
$\Bar{V}\! =\!\{\! v\!\in\!V_a\!\cup\!V_b\!\mid\!rank(v)\!<\!(1+\epsilon)\Bar{r} \}$\\
$V_a=V_a\setminus\Bar{V},V_b=V_b\setminus \Bar{V}$\\
\If{$maximum\_density<\rho(V_a,V_b)$}{
	$S_a=V_a,S_b=V_b$,\\
	$E_c=E_c[V_a,V_b]$
	}
    }
    \hspace{1.1em}\KwRet{$G_c[S_a,S_b]$}
\end{small}
\end{algorithm}
\end{minipage}
\begin{algorithm}[H]
\SetAlgoVlined
\SetAlFnt{\small} 
\SetAlTitleFnt{\small} 
\SetAlCapFnt{\small}
\SetAlCapNameFnt{\small}
\SetAlCapHSkip{0ex} 
\SetAlgoCaptionSeparator{} 
\SetInd{1.6em}{1.0em} 
\SetNlSty{textsf}{}{:}
\SetNlSkip{-0.7em} 
\SetKwProg{MyRepeat}{repeat}{}{}{}
\SetKwProg{MyFor}{for}{}{}{}
\SetKwProg{MyWhile}{while}{}{}{}
\SetKwRepeat{Do}{do}{\hspace{0.5em} while}%
\setstretch{}
\caption{Local Search (LS) } \label{localsearch}
\setstretch{1}
\begin{small}
\KwIn{ $G(V_a,V_b,E_a, E_b,E_c)$,with $V_a \neq \phi,V_b\neq \phi$\\\hspace{1.1em} $seedS_a$ = Set of seeds in $V_a$\\\hspace{1.1em} $seedS_b$ = Set of seeds in $V_b$\\} 
\KwOut{A sub Triple Network $G[S_a,S_b]$ of $G$}
\BlankLine
\hspace{1.1em} $S_a$ = Spanning tree of $seedS_a$ in $G_a$\\
\hspace{1.1em} $S_b$ = Spanning tree of $seedS_b$ in $G_b$\\
\hspace{1.1em} $\delta(S_a)$ = $\{v\not\in S_a|\text{ $S_a$ contains $v$'s neighbor in $G_a$}\}$,Boundary of $S_a$ in $G_a$\\
\hspace{1.1em} $\delta(S_b)$ = $\{v\not\in S_b|\text{ $S_b$ contains $v$'s neighbor in $G_b$}\}$, Boundary of $S_b$ in $G_b$\\
\hspace{1.1em} $nbhd$, the adjacency list of $V_a$ in $G_a$ and $V_b$ in $G_b$\\
\hspace{1.1em}$max\_density = \rho(G[S_a,S_b])$ \\
\hspace{1.1em}\Do{
$\rho(G_c[S_a,S_b])\geq max\_density$ and $\delta(S_a)\cup\delta(S_b) \neq \phi$
}{
$v$ = node in $\delta(S_a)\cup \delta(S_b)$ with the highest bi-partite connections to $S_a \cup S_b$\\
$S_a = S_a \cup v $ if $v \in V_a$, $S_b = S_b \cup v $ if $v \in V_b$\\
$\delta(S_a)\cup\delta(S_b) = \delta(S_a)\cup\delta(S_b)\cup nbhd(v) \setminus \{v\} $\\
$max\_density = max(max\_density,\rho(G[S_a,S_b]))$
}
\hspace{1.1em} return $G[S_a,S_b]$
\end{small}
\end{algorithm}
\subsection{Local Search}\label{LS}
In this section, we introduce Local Search (LS), a bottom-up approach for obtaining CDC subgraphs around seeds -- pre-selected nodes. Let $S_a$ and $S_b$ be the spanning trees of desired seeds in $V_a$ and $V_b$. LS, outlined as Algorithm \ref{localsearch}, iteratively includes previously un-included boundary node of $S_a \cup S_b$ with the maximum adjacency value to the set of included nodes. LS hence finds CDC subgraph by adding nodes that increase the density while maintaining connectedness of $S_a$ and $S_b$.  

As illustrated in experiments, LS yields local patterns with good semantic value. In practice, the search stops in a few iterations and hence LS is emperially the fastest algorithm yet.
\subsection{Algorithms for variants}

We obtain OCD subgraphs as bi-products of mining CDC subgraphs. For MDS and Greedy Node Deletions, the resultant sub Triple Networks maintaining exactly one connectedness with the highest density are yielded as OCD subgraphs. We instantiate LS algorithm with either $S_a$ or $S_b$ to be empty and obtain CDC\_seeds and OCD\_seed subgraphs.  

\section{Experiment results}
In this section, we evaluate the effectiveness and efficiency of the proposed methods through comprehensive experiments on real and synthetic datasets. We demonstrate the effectiveness of CDC and OCD subgraphs by illustrating novelty of the information obtained from these subgraphs on real Triple Networks. We demonstrate the efficiency of our algorithms by measuring the running times of the algorithms and the density of the resultant CDC subgraphs. The programming language employed is Python 2.7 and the experiments were conducted on Intel Core i7 3.6Gz CPU with 32G memory.

\subsection{Real Triple Networks}
We employ Triple Networks constructed from Twitter, NYC taxi data, Flixter and ArnetMiner coauthor datasets. Table \ref{tab:networksdescription} describes the statistics of these real Triple Networks. \\
\textbf{NYC Taxi data} New York City (NYC) yellow cab taxi data is a dataset \cite{nyc_data_link} where each taxi trip's pick-up and drop-off point is a geographic location in decimal degrees. We consider the trips from June 2016 to construct a Triple Network. The geographic location accuracy of this dataset is thresholded up to 5 decimal points, preserving granularity to different door-entrances. Hence $G_a$ and $G_b$ are the networks of pick-up and drop-off points. In these networks, edges connect the points within 50 meters of haversine distance. The taxi trips are represented as $E_c$. \\
\textbf{Twitter network} Twitter is a social media for micro-blogging where users can follow each other for updates. To extract meaningful user-follower relationships, we choose popular news networks, namely CNN, Huffington Post and  Fox News, and randomly extract a few thousand of their intersecting followers. We iteratively grow this network by including followers of existing nodes using Twitter's REST API. At each iteration, we threshold users by number of recent tweets and number of followers. Thus, we construct a 5-hop users-followers network $G_a$, where two users are connected if one follows the other. We collect different hashtags from these users' tweets with $E_c$ as users posting hashtags. We consider two hashtags connected if they appear in the same tweet, and thus construct hashtag co-occurance network as $G_b$.\\
\textbf{ArnetMiner Coauthor data} ArnetMiner Coauthor dataset \cite{tang2008arnetminer} is comprised of two types of networks: authors and their co-author relationships as $G_a$, and their research interests as $G_b$, with $E_c$ as relations of authors to their research interests. We consider two research interests linked if they co-occur in an other's list of research interests.\\
\textbf{Flixter data} Flixter \cite{jamali2010matrix} is a social network of users and their movie ratings. We consider the users social network as $G_a$, the users' rankings of movies as $E_c$, and movies as $V_b$. With no sufficient information, we consider $|E_b|=0$.

\vspace{-20pt}
\begin{table}
\parbox{.58\linewidth}{
\centering
\small{
\caption{Real triple-networks on NY Taxi data (TX), Twitter (TW), ArnetMiner (AM), and Flixter (FX) data}\label{tab:networksdescription}
\setlength\tabcolsep{1.5pt} 
\begin{tabular}{|r|r|r|r|r|r|}
\hline
Data & $|V_a|$&$|E_a|$&$|V_b|$&$|E_b|$&$|E_c|$\\
\hline
TX & $733896$ & $31513503$& $794085$ & $13465065$ & $2066569$  \\
\hline
TW & $61726$ &$7008491$ &$3679824$ &$2896925$&$48269139$ \\
\hline
AM & $1712433$ &$4258946$ & $3901018$ &$953490$ &$12589981$\\
\hline
FX & $786936$ & $7058819$ & $48794 $ & $0$ & $8196077$\\
\hline
\end{tabular}
}
}
\hfill
\parbox{.38\linewidth}{
\centering
\small{
\caption{Synthetic Random and R-MAT networks \label{tab:syntheticnetworksdescription}}
\begin{tabular}{|c|c|c|c|c|}
\hline
 $|V_a|$&$|E_a|$&$|V_b|$&$|E_b|$&$|E_c|$\\
\hline
 $2^{19}$ &$5\times 10^6$ & $2^{19}$ &$5\times 10^6$ &$10^7$\\
\hline
 $2^{20}$ &$10^7$ & $2^{20}$ &$10^7$ &$2 \times 10^7$\\
\hline
 $2^{21}$ &$2 \times 10^7$ & $2^{21}$ &$2 \times 10^7$ &$4 \times 10^7$\\
\hline
 $2^{22}$ &$4 \times 10^7$ & $2^{22}$ &$4 \times 10^7$ &$8 \times 10^7$\\
\hline
\end{tabular}
}
}
\vspace{-20pt}
\hspace{-20pt}
\end{table}
\subsection{Synthetic Triple Networks}
We generated random networks with synthetic $G_a, G_b$ and $G_c$ having random edges in order to evaluate efficiency of our algorithms. 
To approximate real world Triple Networks, we also generated R-MAT networks with $G_a$ and $ G_b$ having R-MAT edges \cite{chakrabarti2004r,bader2006gtgraph} and $G_c$ having random edges. We generated four different configurations for random and R-MAT networks (see Table \ref{tab:syntheticnetworksdescription}).
To the best of our knowledge, there are no algorithms to obtain CDC subgraphs. 
However, the MDS algorithm provides the densest bipartite subgraph, and hence is an upper-bound to the density of CDC. 
The high time complexity of MDS algorithm limits its applicability with real problems and thus we used synthetic benchmarks.
\begin{table}[]
    \centering
    \begin{tabular}{|r|r|r|r|r|}
        \hline
        $|V_a|=|V_b|$&$|E_a| = |E_b|$&$|E_c|$ & \begin{tabular}{@{}c@{}}Random networks:\\ MDS/GRD bipartite \end{tabular} &\begin{tabular}{@{}c@{}}RMAT networks:\\ MDS/GRD bipartite \end{tabular}\\
        \hline
         $2^{15}$ &$3.125 \times 10^5$ &$ 6.25 \times 10^5$& 0.9897 & 1.1970\\
        \hline
        $2^{16}$ &$6.25 \times 10^5$ & $ 1.25 \times 10^6$ & 0.9901 & 1.1898\\
        \hline
        $2^{17}$ &$1.25 \times 10^6$ & $ 2.5 \times 10^6$ & 0.9865 & 1.2101\\
        \hline
        $2^{18}$ &$ 2.5 \times 10^6$ & $5 \times 10^6$ & 1.0010 &  1.1985\\
        \hline
        $2^{19}$ &$5 \times 10^6$ & $10^7$ & 0.9753 &  1.2021\\
        \hline
    \end{tabular}
    \caption{Caption}
    \label{tab:my_label}
\end{table}

\subsection{Effectiveness Evaluation on Real Networks}\label{Effective}
\begin{figure}[!t]\vspace{2pt}
\centering
\subfigcapskip = -2pt
\subfigure[\scriptsize CDC subgraph yielding directional flow of human migration in 1 hour period
]{
\includegraphics[width=0.34\linewidth, height = 1.3 in]{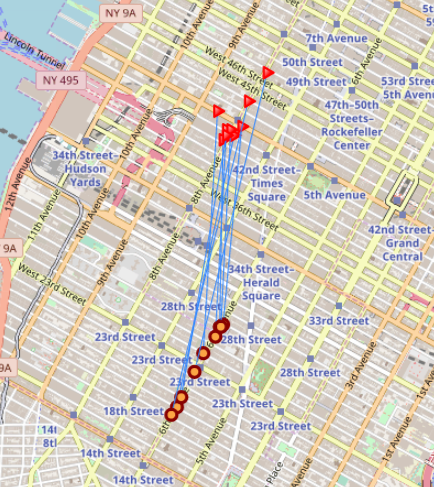}
\label{fig:nycdc}
}\hspace{2pt}
\subfigure[\scriptsize OCD subgraph yielding drop-off hot-spots on a street in 4 hours period]{
\includegraphics[width=0.34\linewidth, height = 1.3 in]{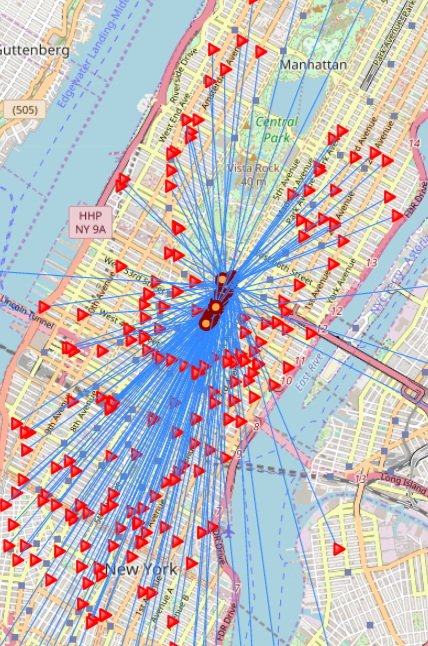}
\label{fig:nyocd}
}\hspace{2pt}
\vspace{-10pt}
\caption{CDC and OCD subgraphs from NY Taxi data. Triangles and circles represent pick-up and drop-off points respectively}
\vspace{-10pt}
\label{fig:nytaxi}
\end{figure}
\begin{figure}[!t]\vspace{2pt}
\centering
\subfigure[{\scriptsize CDC subgraph representing Patriots' fans}]{

\includegraphics[width=0.57\linewidth]{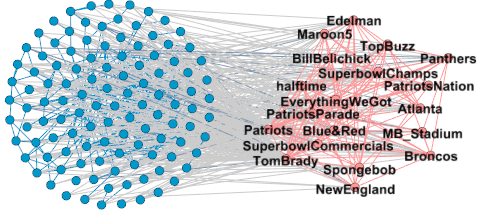}\label{fig:democrats}
}
\hfill
\subfigure[{\scriptsize CDC subgraph representing Rams' fans}]{

\includegraphics[width=0.34\linewidth]{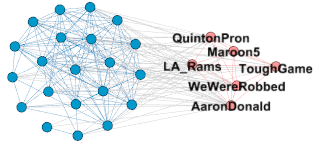}\label{fig:republicans}
}
\vspace{-12pt}
\vspace{-2pt}
\caption{CDC subgraphs from Twitter. Users-followers networks on the left and hashtag networks on the right.}
\label{fig:twitter}
\end{figure}
\begin{figure}[!t]\vspace{2pt}
\centering
\subfigure[{\scriptsize CDC\_seeds subgraph with author and research-interests seeds}]{
\includegraphics[width=0.5\linewidth]{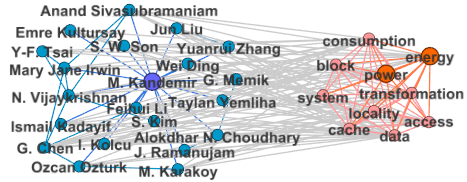}\label{fig:aminercdc}
}
\hfill
\subfigure[{\scriptsize  OCD\_seed subgraph with research-interest seeds }]{
\includegraphics[width=0.37\linewidth]{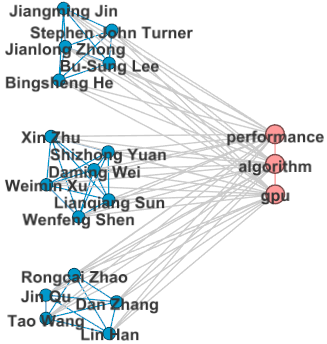}\label{fig:aminercdctopicseeds}
}
\vspace{-12pt}
\vspace{-2pt}
\caption{CDC and OCD subgraphs from ArnetMiner. Co-author networks on the left and research-interest networks on right. }
\label{fig:aminer}
\end{figure}
\begin{figure}[!t]\vspace{2pt}
\centering
\subfigcapskip = -2pt
\subfigure[\scriptsize OCD\_seed subgraph of user seeds influenced by movies]{
\includegraphics[width=0.35\linewidth]{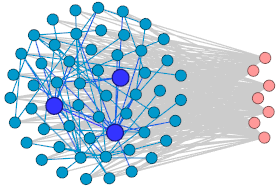}
\label{fig:flixteruserseeds}
}
\hspace{8pt}
\subfigure[\scriptsize OCD subgraph of a possible fraud]{
\includegraphics[width=0.35\linewidth]{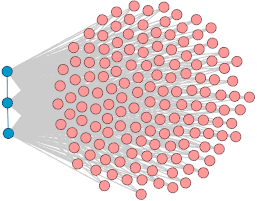}
\label{fig:flixterfraud}
}\hspace{8pt}
\vspace{-12pt}
\caption{ OCD subgraphs from Flixter. User networks on the left and movie networks on the right.}
\label{fig:flixter}
\end{figure}
We illustrate the effectiveness of CDC subgraphs and variants by emphasizing the knowledge gain from these patterns obtained from real networks. These figures demonstrate that CDC subgraphs and variants are communities detected by the strong associations to their attributes. These subgraphs identify similar opinions, research interests and factors influencing communities. They are also effective tools for hot-spot detection and fraud detection.\\
\textbf{NYC Taxi data} Figure \ref{fig:nytaxi} illustrates CDC and OCD subgraphs with pick-up and drop-off points as triangles and circles respectively.\par
Figure \ref{fig:nycdc} illustrates the CDC subgraph with pick-up locations on 6th Avenue between 18th and 27th street populated with food and shopping destinations, and drop-of locations on 8th Avenue. This CDC subgraph is generated by observing the 6:00-7:00 pm traffic on June 4, 2016. The drop-off points are clustered near 42nd street Port Authority bus terminals of city transit. This CDC subgraph gives a directional flow of human migration in a short distance during a specific time-frame. Figure \ref{fig:nyocd} illustrates OCD subgraph with pick-up seeds near 5th Avenue and Central Park South. This subgraph is generated by observing 4:00-8:00 pm traffic on June 1, 2016. The pick-up points are scattered along Manhattan and the drop-off points are clustered around Pennsylvania Station, a public transit hub. Thus, OCD subgraphs could be equivalents to hot-spot detection.\\
\textbf{Twitter Network} Figure \ref{fig:twitter} represents CDC subgraphs obtained from Twitter Network. Left and right subgraphs represent users-followers and hashtag networks. We remove usernames to protect user privacy. These figures represent twitter users and their opinions about SuperBowl contenders, Patriots and LA Rams. Hence, CDC subgraphs can identify communities with contrasting opinions.\\
\textbf{ArnetMiner coauthor data} Figure \ref{fig:aminer} depicts CDC\_seeds and OCD\_seed subgraphs from ArnetMiner Triple Network. Left and right subgraphs represent author-coauthor and research-interest networks. 

Figure \ref{fig:aminercdc} is a CDC\_seeds subgraph with randomly chosen author seed $\{$M.Kandimir$\}$ and interest seeds $\{$power,energy$\}$. This pattern yields author seed's associates working on related research topics of interest seeds. Figure \ref{fig:aminercdctopicseeds} is OCD\_seed subgraph with interest seeds chosen as $\{$algorithm, gpu, performance$\}$. This patterns yields 16 authors and their respective co-author networks with publications related to interests seeds. Thus, even with the given seeds, the CDC and OCD subgraphs are different from supervised community detection.\\
\textbf{Flixter data} Figure \ref{fig:flixter} depicts OCD subgraphs illustrating influence of movies on users. Left and right subgraphs represent the users' social networks and the movies networks, The users networks are connected. 

Figure \ref{fig:flixteruserseeds} is an OCD\_seed subgraph with users seeds, chosen at random. The right network represents movies with 5 star rankings by the users on the left. This pattern hence finds the movies influencing the friend-circle of the seed users. An OCD subgraph in figure \ref{fig:flixterfraud} depicts a suspicious ranking activity, where the 3 users on the left give a 5 star ranking to 144 movies on the right. CDC and OCD subgraphs hence illustrate the power of potential fraud detection.

\subsection{Efficiency evaluation}
We evaluate the efficiency of our heuristic algorithms by their running-time and the quality of the resulting CDC subgraphs from real and synthetic networks. \\\\
\textbf{Greedy node deletions} The running-times of  MDS, GND, GRD, FRD algorithms on real, random and R-MAT networks are depicted in Figure \ref{fig:topdownruntimes}. The x axis represents the number of nodes in $V_a \cup V_b$ and the y axis represents log scale of seconds. Each point represents running-time of the algorithm for given network. The running-time of MDS algorithm for larger networks is more than 24 hours, when we halted the algorithm computations. Running-times increase with network size, but vary a little for random and R-MAT graphs of the same size. FRD with $\epsilon = 0 $ is the fastest algorithm.\par
We discover that GRD yields the densest bipartite subgraph among all algorithms. The densities of CDC subgraphs obtained by GND, GRD and FRD from random and R-MAT networks are presented in table \ref{tab:randomdensities} and \ref{tab:rmatdensities}. For each graph, DBP represents the density of the densest bipartite graph obtained by GRD, without being connected in $G_a$ or $G_b$. The ratio, DBP/CDC densitiy, varies a little with the network size. This trend is observed across all network types and algorithms. GRD produces the best and FRD with $\epsilon = 0$ produces the least accurate results. \\\\
\begin{figure}[t]
\centering
\subfigcapskip = -2pt
\subfigure[\scriptsize Random networks]{
\includegraphics[height=1.1in]{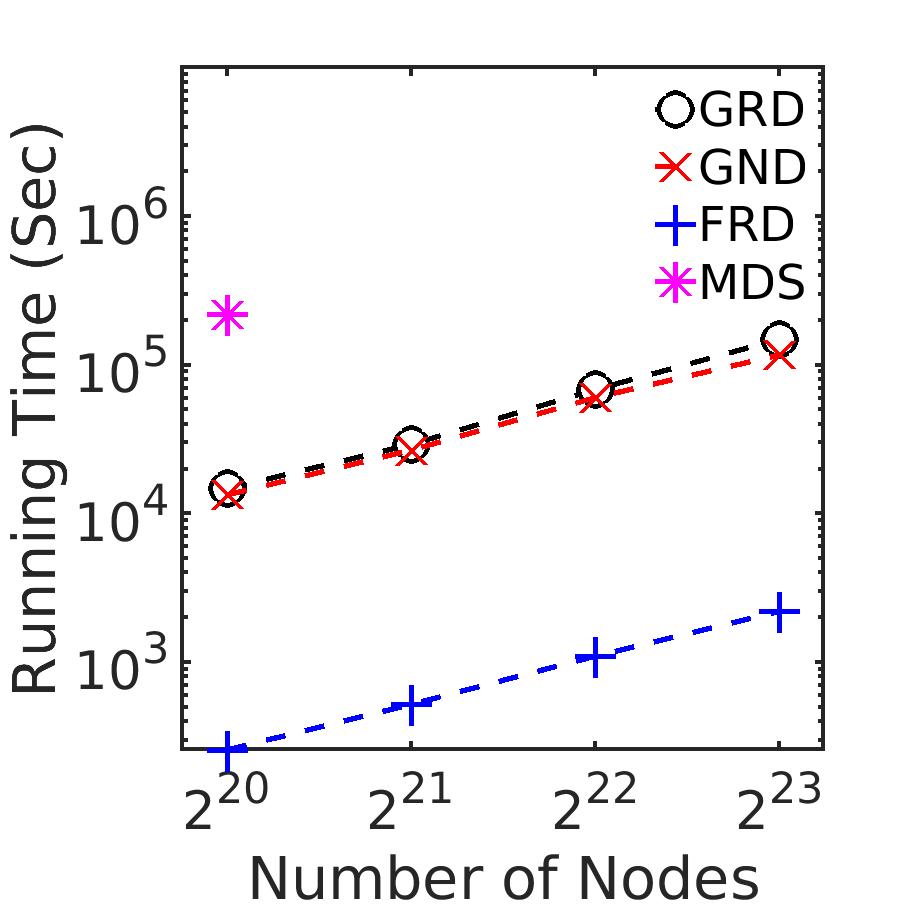}
}\hspace{4pt}
\subfigure[R-MAT networks]{
\includegraphics[height=1.1in]{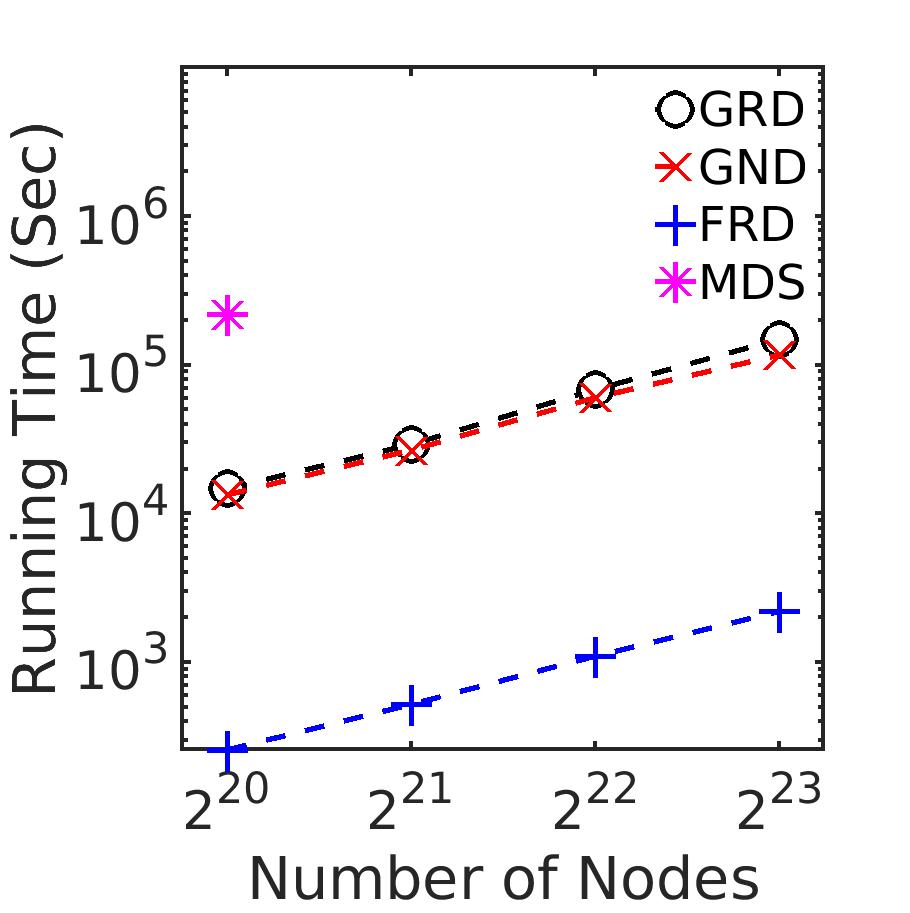}
}\hspace{4pt}
\subfigure[Real Networks]{
\includegraphics[height=1.3in]{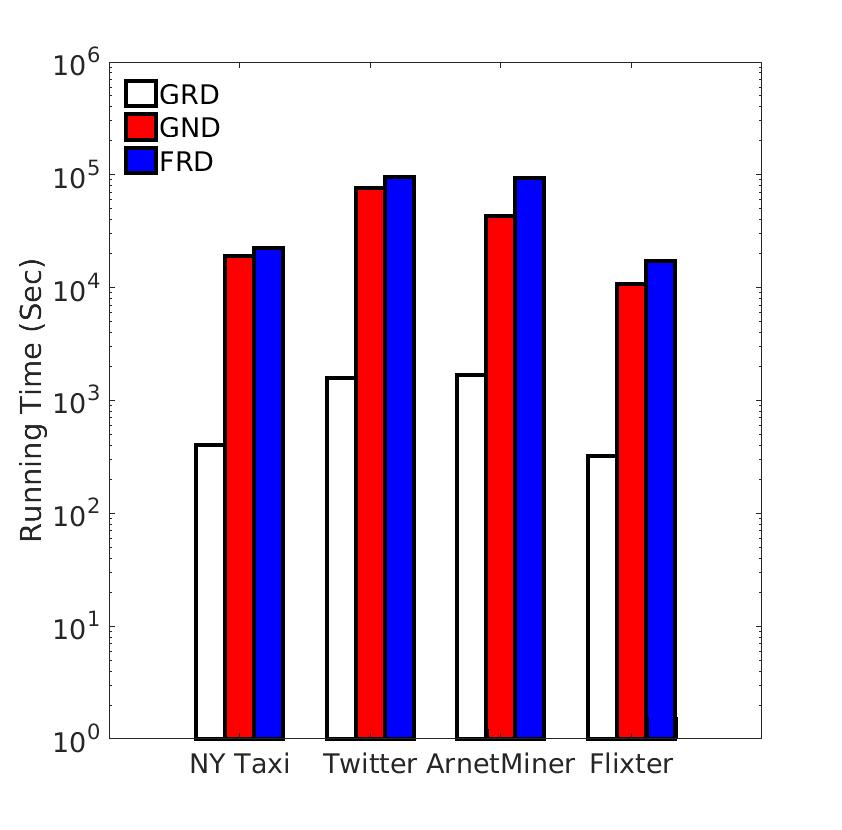}
}\label{fig:realnetrunningtimes}
\vspace{-12pt}
\caption{Running-times for MDS, GND, GRD and FRD }\label{fig:topdownruntimes}
\vspace{-2pt}
\end{figure}
\begin{table}[]
\parbox{.48\linewidth}{
\centering
\small{
\caption{CDC subgraph densities from random networks}\label{tab:randomdensities}
\setlength\tabcolsep{1.5pt} 
\begin{tabular}{|r|r|r|r|r|}\hline
 & $2^{20}$ & $2^{21}$ & $2^{22}$ & $2^{23}$\\\hline
DBP& 19.083 &	19.095 &	19.094&	19.086\\\hline
GND& 18.713 &	18.705 & 18.691	& 18.720\\\hline
GRD& 18.901 &	18.836 & 18.837	& 18.698\\\hline
FRD& 7.401 & 7.389 & 7.402 & 7.401\\\hline
\end{tabular}
}
}
\hspace{5pt}
\parbox{.48\linewidth}{
\centering
\small{
\caption{CDC subgraph densities from R-MAT networks \label{tab:rmatdensities}}
\begin{tabular}{|r|r|r|r|r|}\hline
 & $2^{20}$ & $2^{21}$ & $2^{22}$ & $2^{23}$\\\hline
 DBP& 19.071 &	19.065 &	19.073 & 19.072\\\hline
GND& 17.028	& 16.761 & 17.019 &	16.627\\\hline
GRD& 17.201	& 17.002 &	17.046 & 16.689\\\hline
FRD& 6.612 & 6.610 &	6.509 & 6.501\\\hline
\end{tabular}
}
}
\hspace{-20pt}
\end{table}
\textbf{Local Search (LS)} Given the seeds of $V_a$ and $V_b$, LS produces meaningful, locally dense CDC patterns. We evaluate the efficiency of LS algorithm by measuring its running-times with  2, 4 and 8 seeds. Figure \ref{fig:lsruntimes} presents the running-times of LS. The x axis represents the number of nodes in $V_a \cup V_b$ and the y axis represents running-times in seconds. Each point represents running-time of FRD for given network and seed configuration. The seeds are chosen randomly in the same connected components. The boundaries $\delta(S_a)$ and $\delta(S_b)$ grow larger with increase in the number of seeds. Hence the running-time of LS increases with the number of seeds.  We observe similar trends from real networks. In synthetic networks, for a given number of seeds, LS running-times vary a little across different network sizes. This is because LS halts when the density of the current CDC subgraph starts decreasing, which depends only on the local topologies of $G_a$ and $G_b$. \\\\
\begin{figure}[!t]\vspace{2pt}
\centering
\subfigcapskip = -2pt
\subfigure[\scriptsize Random networks]{
\includegraphics[height=1.1in]{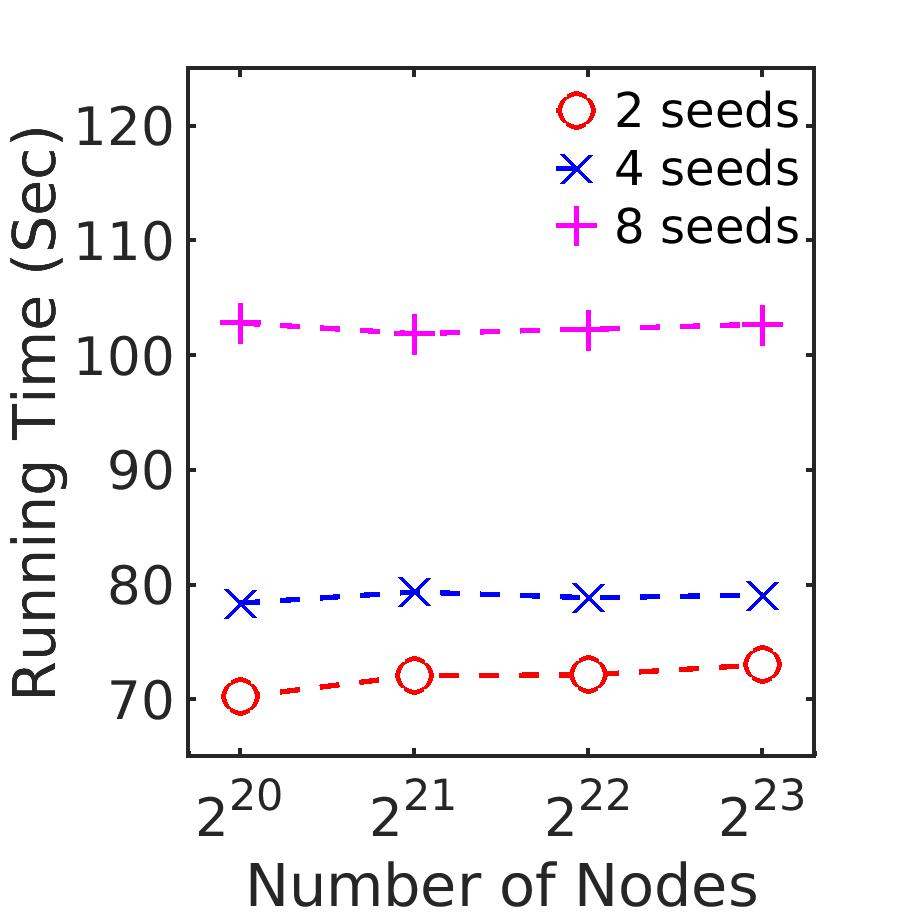}
}\hspace{6pt}
\subfigure[\scriptsize R-MAT networks]{
\includegraphics[height=1.1in]{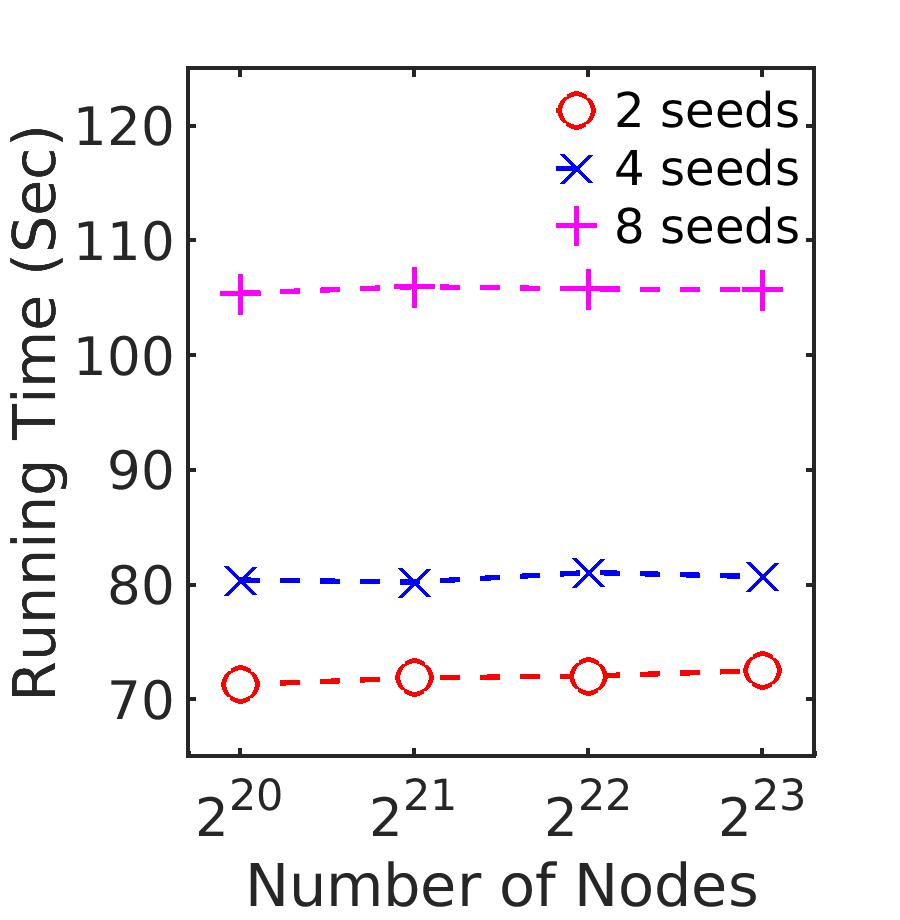}
}
\vspace{-12pt}
\caption{LS running-times with 2,4 and 8 seeds }\label{fig:lsruntimes}
\vspace{-2pt}
\end{figure}
\textbf{Fast Rank Deletion (FRD)} The purpose of FRD is to obtain feasible CDC subgraphs faster. This is achieved by deleting all the nodes with degree less than $(1+\epsilon)*\textit{average degree}$ at each pass. However, lower $\epsilon$ values result in fewer deletions per pass, defying the purpose of FRD. Higher $\epsilon$ values result in more deletions per pass, lowering the densities of the resulting CDC subgraphs. Hence the meaningful results are obtained with $\epsilon$ values in the range of interval $[-0.4,0.4]$.
\begin{figure}[!t]\vspace{2pt}
\centering
\subfigcapskip = -2pt
\subfigure[\scriptsize Random networks]{
\includegraphics[height=1.1in]{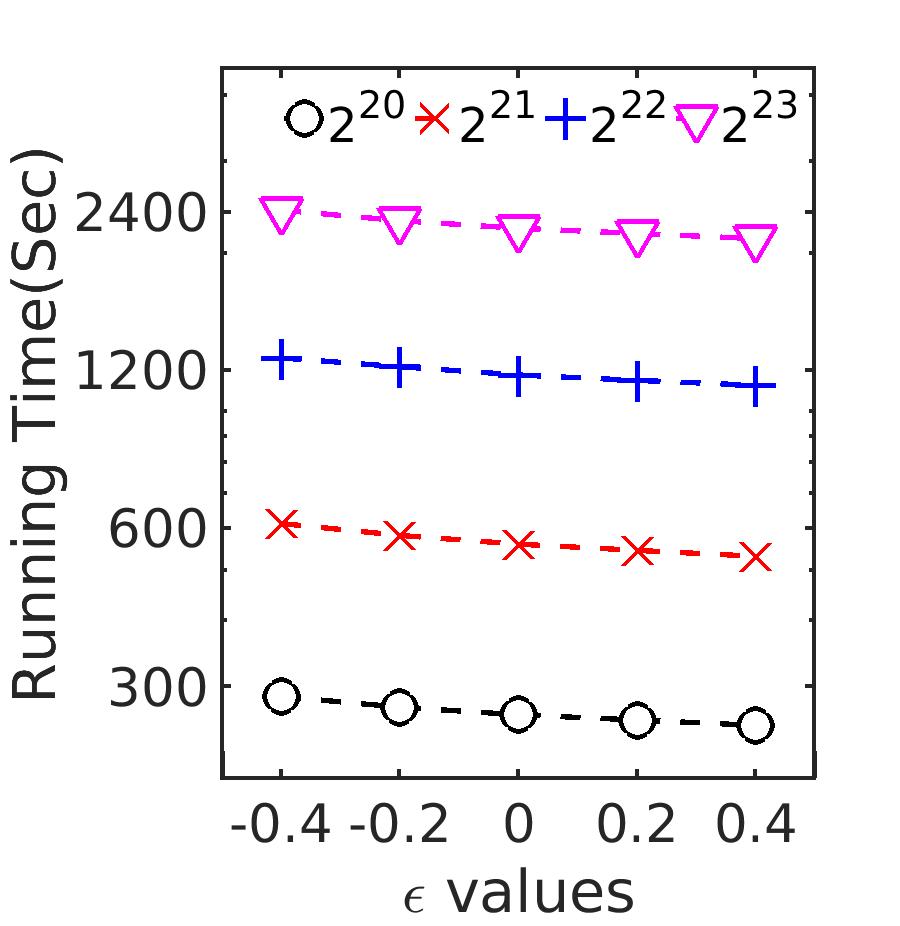}\label{fig:frdrunrandom}
\label{fig:frdruntimerandom}
}\hspace{-15pt}
\subfigure[\scriptsize R-MAT networks]{
\includegraphics[height=1.1in]{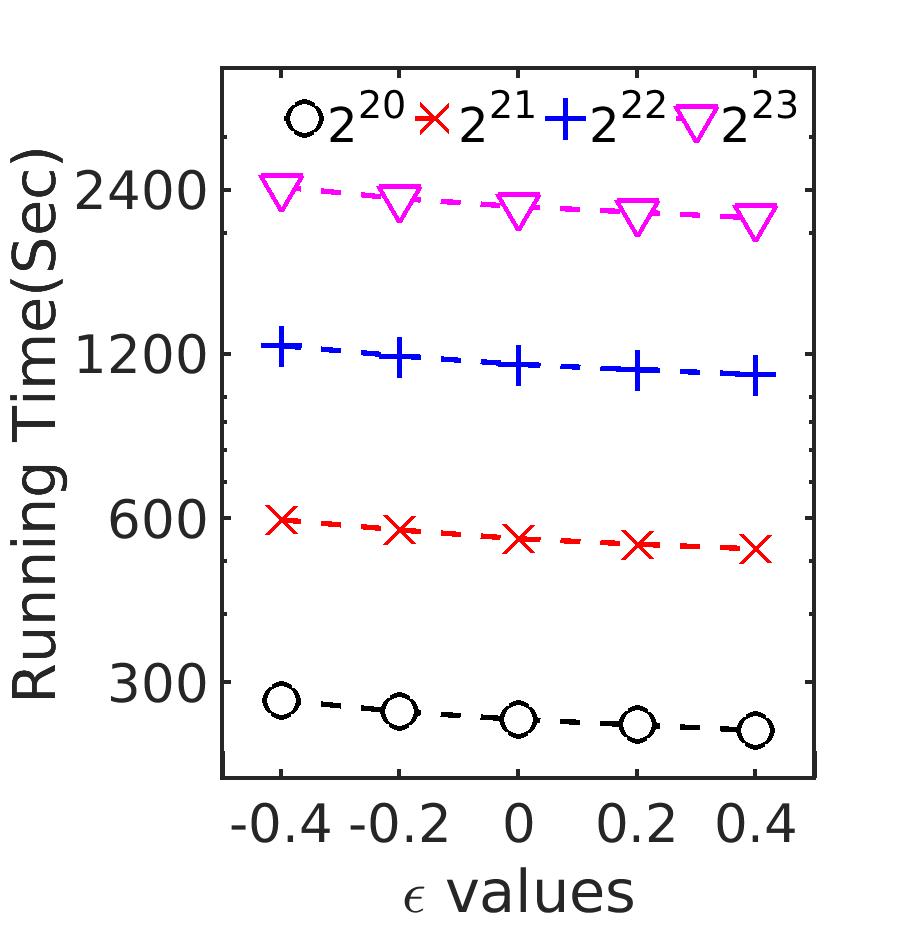}\label{fig:frdrunrmat}
\label{fig:frdruntimermat}
}\hspace{-15pt}
\subfigure[\scriptsize Random networks]{
\includegraphics[height=1.1in]{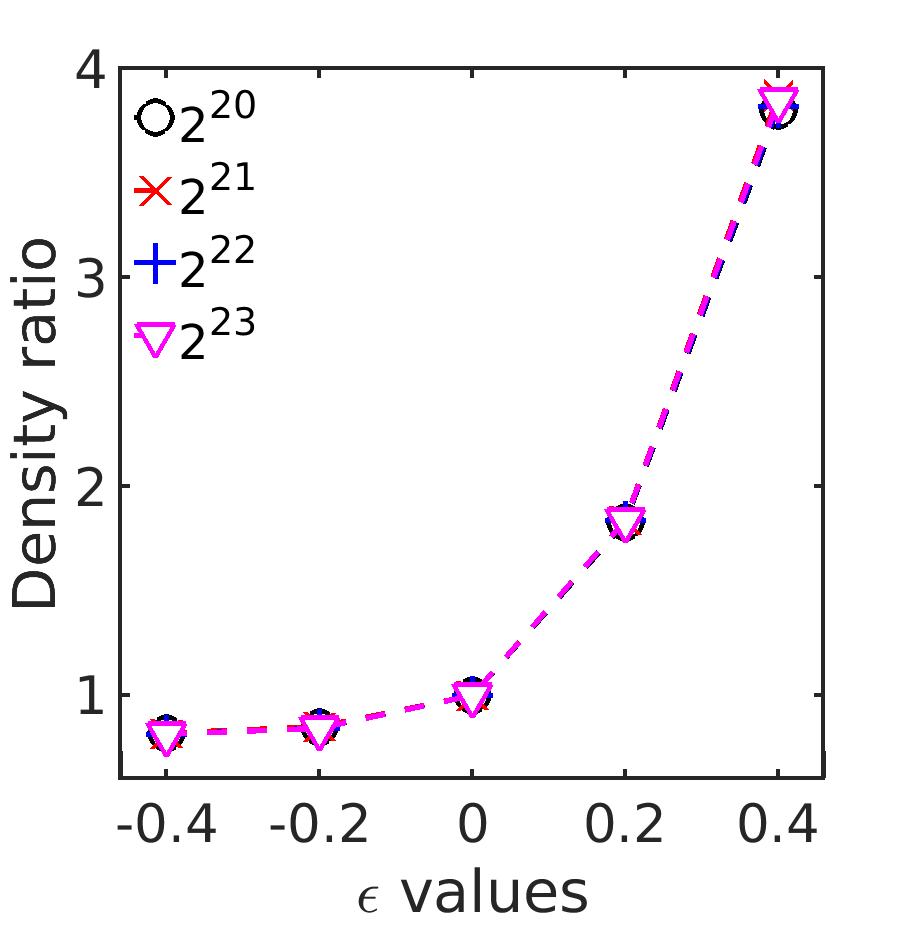}
\label{fig:frdapproxrandom}
}\hspace{-15pt}
\subfigure[\scriptsize R-MAT networks]{
\includegraphics[height=1.1in]{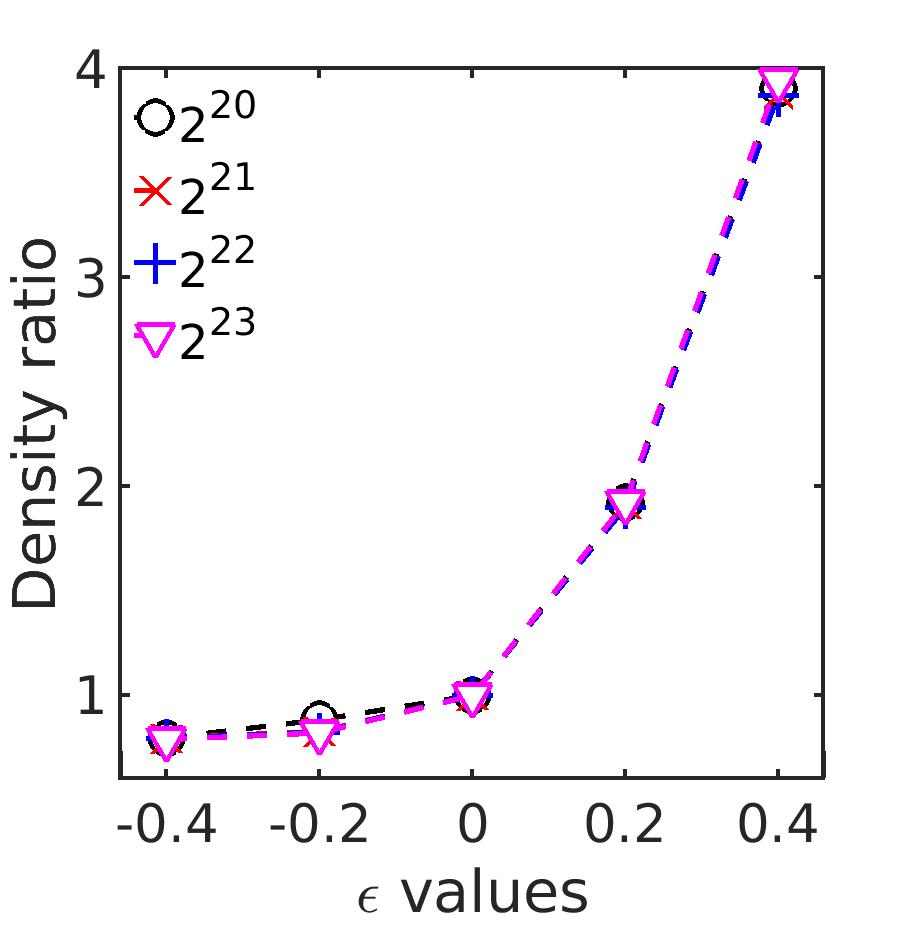}
\label{fig:frdapproxrmat}
}
\vspace{-15pt}
\caption{FRD evaluations for $\epsilon \in [-0.4,0.4]$  }\label{fig:frdruntimes}
\vspace{-2pt}
\end{figure}

Figures \ref{fig:frdruntimerandom} and \ref{fig:frdruntimermat} represent the running-times of FRD. The x axis represents different $\epsilon$ values and the y axis represents running-times in log scale of seconds. Each point represents running-time of FRD for given network and $\epsilon$ configurations. Increase in $\epsilon$ value causes higher amount of deletion per pass, resulting in fewer passes. Hence, the running-times decrease with the increase of $\epsilon$.

Figures \ref{fig:frdapproxrandom} and \ref{fig:frdapproxrmat} represent the density change of resultant CDC subgraphs for given $\epsilon$ value, with respect to $\epsilon=0$. The x axis represents different $\epsilon$ values, and the y axis represents the ratio, Density of CDC for $\epsilon=0$/Density of CDC with given $\epsilon$. Each point represents this density ratio obtained by FRD, for given network and $\epsilon$ configurations. Higher $\epsilon$ values result in more deletions per pass, lowering the densities of the resulting CDC subgraphs. Hence, the density ratio increases as the $\epsilon$ value decreases. We observe similar trends from real networks. The densities of resultant CDC subgraphs obtained by FRD depend on network topologies. Hence, for the same type of synthetic networks with the same $\epsilon$ value, the variance in the density ratio is low. 

\section{Conclusion}
In this paper, we introduce Triple Network, its CDC subgraph problem and its variants. We provide heuristics to find feasible solutions to these patterns, otherwise NP-Hard to find. We conclude that CDC subgraphs yield communities with similar charasteristics by illustrating the information gain of these patterns in NYC taxi, Twitter, ArnetMiner, and Flixter networks. We demonstrate the efficiency of our algorithms on large real and synthetic networks by observing running-time and density trends in real and synthetic networks.

\bibliographystyle{splncs04}

\bibliography{09References.bib}

\balance

%
%
%
\end{document}